\newtheorem*{rep@theorem}{\rep@title}
\newcommand{\newreptheorem}[2]{%
\newenvironment{rep#1}[1]{%
 \def\rep@title{#2 \ref{##1}}%
 \begin{rep@theorem}}%
 {\end{rep@theorem}}}
\newtheorem{theorem}{Theorem}
\newtheorem*{assumption*}{Assumption} 
\newtheorem{definition}{Definition}
\begin{document}

\preprint{APS/123-QED}

\title{Simulating Arbitrary Non-Hermitian Dynamics via Quantum Monte Carlo}
\author{Xiaogang Li}
 \affiliation{School of Computer Science, Peking University, Beijing 100871, China}
 \affiliation{Center on Frontiers of Computing Studies, Peking University, Beijing 100871, China}
\author{Kecheng Liu}
\affiliation{School of Computer Science, Peking University, Beijing 100871, China}
\affiliation{Department of Computer Science, Joint Center for Quantum Information and Computer Science, University of Maryland, College Park, USA}
 \author{Qi-Ming Ding}
\email{dqiming94@pku.edu.cn}
 \affiliation{School of Computer Science, Peking University, Beijing 100871, China}
 \affiliation{Center on Frontiers of Computing Studies, Peking University, Beijing 100871, China}
 

\date{\today}
\begin{abstract} 
Simulating the dynamics of non-Hermitian quantum systems is essential for probing the frontiers of modern physics. However, existing quantum algorithms are fundamentally constrained by the probability loss inherent to probabilistic post-selection, extensive quantum circuit depths and the substantial overhead of ancillary qubits. Herein, we introduce a hybrid quantum-classical algorithm grounded in Quantum Monte Carlo (QMC) to concurrently address both limitations. By reformulating non-unitary evolution as statistical sampling of Hamiltonian dynamics, our framework alleviates the probability loss issue induced by post-selection, achieves a reduction in quantum circuit depth, and requires at most one additional ancillary qubit. {This general and flexible approach is capable of simulating arbitrary, time-dependent, and even non-diagonalizable Hamiltonians, providing a unified solution for diverse problems including parity-time $\mathcal{PT}$-symmetric systems, open quantum systems, and non–completely positive trace-preserving (non-CPTP) maps.} {We demonstrate our method on a many-body open system, achieving excellent agreement with exact solutions.} {This work provides a resource-efficient pathway for exploring novel phenomena in complex non-Hermitian systems on near-term quantum devices.}
\end{abstract}
\maketitle
\textbf{\emph{Introduction.---}}\label{introduction}{The study of non-Hermitian systems has unveiled a rich landscape of physical phenomena absent in their Hermitian counterparts~\cite{Bender2007a,el2018non,Kawabata2019,Ashida2020}. These include $\mathcal{PT}$ symmetry~\cite{Bender1998,Bender1999,Bender2024}, exceptional points~\cite{Heiss2004,Minganti2019}, and the non-Hermitian skin effect~\cite{MartinezAlvarez2018,Yao2018,Okuma2020,li2024observation,PhysRevB.97.121401,Foa_Torres_2020}. These features not only challenge the fundamental understanding of quantum mechanics but also open avenues for applications in topological photonics~\cite{Zhao2019,Wang2021,Nasari2023,Yan2023}, dissipative quantum computation~\cite{Verstraete2009,Lin2022}, and quantum sensing~\cite{Chu2020,Ding2023,Xiao2024a}. At the dynamical level, however, the performance and functionality of these platforms are determined by the time evolution generated by effective non-Hermitian Hamiltonians in open quantum settings, which governs stability thresholds, controllability, transient amplification, and metrological sensitivity~\cite{Ashida2020,Zhao2019,Lin2022,Chu2020}. Therefore, a detailed and predictive understanding of these systems requires an efficient simulation method for non-Hermitian dynamics.}

{Despite this importance, simulating non-Hermitian dynamics, particularly in quantum many-body systems, remains a central challenge.} Classical approaches are hindered by the exponential scaling of the Hilbert space~\cite{Feynman2018,Georgescu2014}. While quantum computers offer a promising alternative~\cite{Lloyd1996}, their native unitary logic is ill-suited to implementing non-Hermitian dynamics (i.e., effectively non-unitary evolution). Prevailing quantum algorithms bridge this gap by embedding non-Hermitian dynamics into a larger unitary evolution, for example via linear combinations of unitaries (LCU)~\cite{Childs2012,Berry2015} or dilation techniques~\cite{Huang2018,Huang2019,Li2022}. Yet these approaches are inherently probabilistic and rely on post-selection, which severely suppresses the success probability and induces steep resource overheads, often necessitating amplitude-amplification routines such as Oblivious Amplitude Amplification (OAA)~\cite{berry2014exponential,Guerreschi2019}. Furthermore, the implementation of larger unitary evolutions often relies on Trotter-like deterministic algorithms, which require longer quantum circuits, thereby imposing further constraints on the practical utility of quantum algorithms. The above two factors jointly limit the practicality of quantum algorithms in the noisy intermediate-scale quantum (NISQ) devices.

{In this Letter, we introduce a hybrid quantum–classical framework that reduces the circuit depth and alleviates the post-selection problem by leveraging the statistical power of QMC \cite{Troyer2005,Foulkes2001,Carlson2015,Kliesch2011,Yang2021}.} {By directly sampling a unitary evolution operator decomposition of the non-Hermitian propagator, we circumvent explicit post-selection and its associated probability loss. This approach only requires at most one additional ancillary qubit \cite{Yu2025}. Furthermore, it enables the efficient embedding of various effective simulation algorithms as subroutines for Hamiltonian simulation through Quantum Monte Carlo (QMC) methods \cite{Campbell2019,Nakaji2024,Yang2021,Granet2024}.} The resulting algorithm naturally generalizes the quantum imaginary-time evolution (QITE) paradigm~\cite{Huggins2022,Huo2023,Kamakari2022} from time-independent anti-Hermitian Hamiltonian envolutions to arbitary time-dependent non-Hermitian Hamiltonian evolutions, and enabling the simulation of a broader class of physically significant problems, including $\mathcal{PT}$-symmetric systems, general open quantum systems, and non-CPTP maps~\cite{wei2024simulating}. Our approach is highly versatile, in principle applicable to arbitrary time-dependent and even non-diagonalizable non-Hermitian systems, and its modular design allows seamless integration with quantum error mitigation techniques~\cite{Yang2021,Yu2025,endo2021hybrid,RevModPhys.95.045005}. Practically, this delivers clear advantages: we eliminate reliance on deep amplitude-amplification routines, reduce hardware burden by shifting complexity to classical sampling, and retain broad applicability across non-Hermitian models. To demonstrate feasibility, we apply the framework to simulations of open quantum systems, thereby opening a viable route to problems that were previously challenging, such as dynamics in $\mathcal{PT}$-symmetry–broken regimes \cite{Yu2020,Ding2023}.

\textbf{\emph{Framework.---}}\label{dynamics_simulation} {Our central proposal is to reframe the simulation of non-unitary dynamics as a statistical sampling problem, perfectly suited for a hybrid QMC algorithm. The core challenge in simulating a non-Hermitian Hamiltonian $H(t)$ is that its time-evolution operator, $u(t) = \mathcal{T} e^{-i\int_0^t H(s) ds}$, is non-unitary and cannot be implemented directly on a quantum computer. Instead of attempting to deterministically construct this operator, we leverage a key insight: $u(t)$ can be expressed as a continuous linear combination of \textit{unitary} operators. Given a time-dependent non-Hermitian Hamiltonian $H(t)$, which can always be decomposed into $H(t)=H_r(t)-iH_i(t)\label{HrHi}$, where $H_r(t)=[H(t)+H^\dagger(t)]/2$, $H_i(t)=i[H(t)-H^\dagger(t)]/2$. Building on the framework developed in Refs.~\cite{An2023, An2023a}, when $H_i(t)\geqslant0$, we can write}
\begin{align}\label{Cauthy_distribution_trans}
u(t) {= \int_{\mathbb{R}} g(k) U(t,k) dk.}
\end{align}
{where each $U(t,k) = \mathcal{T} e^{-i \int_0^t K_s(k) ds}$ is a unitary evolution generated by a purely Hermitian Hamiltonian $K_s(k) = H_r(s)+k\cdot H_i(s)$. The function $g(k)\equiv\frac{f(k)}{1-ik}$ acts as a complex-valued, unnormalized probability density over the real variable $k$. This representation is the cornerstone of our method, as it transforms the difficult problem of implementing a single non-unitary operator into the tractable task of sampling from a family of unitary ones.} In another case where the Hamiltonian systems do not satisfy the conditions $H_i(t)\geqslant0$, an additional compensation constant $c_p$ is added to $H_i(t)$ such that $H_i(t)+c_p\geqslant0$. At the same time, the cost is that an additional constant $e^{c_pt}$ needs to be multiplied at the end of the right hand side of the above Eq.\eqref{Cauthy_distribution_trans}. More details regarding the kernel function $f(k)$ can be found in Appendix \ref{kernel_functions}

Given an observable $O=\sum_n o_n O_n$, where $o_n\in\mathbb{Z}$, $O_n$ can be selected as any unitary operator for implementation on quantum circuits, and usually $O_n\equiv\sigma_n\in\{I,X,Y,Z\}^{\otimes m}$ is general Pauli operator. In QMC methods, the focus of the quantum simulation task is on ultimately obtaining the mean value of observables $\langle O\rangle(t)\equiv\langle\psi|u^\dagger(t)\cdot O \cdot u(t)|\psi\rangle/\|u(t)|\psi\rangle\|^2$ to a certain precision, rather than on obtaining the final state $u(t)|\psi\rangle/\|u(t)|\psi\rangle\|$. The factor $\|u(t)|\psi\rangle\|$ represents a normalization factor caused by the non-unitary evolution $u(t)$ in the non-Hermitian system. Therefore, the following theorem guarantees the successful implementation of the above quantum simulation task.

\begin{theorem}[Quantum-Classical Monte Carlo Estimator]
\label{theorem_O_QMC}
Given an initial state $|\psi\rangle$, the arbitrary finite-dimensional time-dependent non-Hermitian Hamiltonian $H(t)$,  a new Hamiltonian $\tilde{H}(s)=H_r(s)-i(H_i(s)-{E_i}_0(s))$ with a manually chosen time-dependent quantity ${E_i}_0(s)$, the observable $O=\sum_n o_n O_n$, then the estimation $\langle O\rangle(t)$ of the observable $O$ can be synthesized through unbiased estimation of the numerator part $N(O)\equiv\langle\psi|\mathcal{\overline{T}} e^{-i\int_0^T \tilde{H}^\dagger(s) d s}\cdot O\cdot\mathcal{T} e^{-i\int_0^T \tilde{H}(s) d s}|\psi\rangle$ and the denominator part $D$ (equivalently denoted as $N(I)$) separately,
\begin{align}\label{O_est}
\langle O\rangle=\frac{N(O)}{D}=\frac{\|O\|_{l_1}\cdot\mathbb{E}_{k',k,n}\langle O_n(k',k)\cdot e^{i\theta(n)}\cdot e^{i\theta(k',k)})\rangle}{\mathbb{E}_{k',k}\langle I(k',k)\cdot e^{i\theta(k',k)}\rangle}.
\end{align}
 Here $\langle [\cdot](k',k)\rangle\equiv$ $\langle\psi|\mathcal{\overline{T}} e^{i \int_0^T K_{s'}(k') d s'}[\cdot]\mathcal{T} e^{-i \int_0^T K_s(k) d s}|\psi\rangle$, $\mathcal{\overline{T}}$ is the anti-time-ordering operator, $K_s(k)= H_r(s)-k\cdot (H_i(s)-{E_i}_0(s))$, where the compensation constant ${E_i}_0(s)$ can be chosen as any real value to ensure that $H_i(s)-{E_i}_0(s)\geqslant0$. $\mathbb{E}_{k',k,...}\equiv\mathbb{E}_{k'}\mathbb{E}_k,...\mathbb{E}_{[\cdot]}$ represents the expectation with respect to random variables $k$, $k'$, $...$, the probability density related to $k$ is $\mathrm{Pd}(k)$, $e^{i\theta(k',k)}=e^{i(\theta(k)-\theta(k'))}=g^*(k')g(k)/\|g\|_1^2$ is the phase part, and the sampling probability of $O_n$ is $p_n=|o_n|/\|O\|_{l_1}$, where $\|O\|_{l_1}\equiv\sum_n |o_n|$, $e^{i\theta(n)}\equiv o_n/|o_n|$ denotes the phase part. Noting that $\|O\|_{l_1}\cdot\mathbb{E}_{k',k,n}\langle O_n(k',k)\cdot e^{i\theta(n)}\cdot e^{i\theta(k',k)})\rangle=\mathbb{E}_{k',k}\langle O(k',k)\cdot e^{i\theta(k',k)})\rangle$.
\end{theorem}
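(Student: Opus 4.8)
The plan is to reduce the physical expectation value $\langle O\rangle(T)$ to the quotient $N(O)/D$ by exploiting a scalar gauge freedom, and then to expand both $N(O)$ and $D$ with the LCHS integral representation of Eq.~\eqref{Cauthy_distribution_trans} so that each becomes a nested expectation amenable to Monte Carlo sampling. First I would observe that the modified Hamiltonian is merely a scalar shift of the original: since ${E_i}_0(s)$ is a real c-number, $\tilde H(s)=H_r(s)-i(H_i(s)-{E_i}_0(s))=H(s)+i{E_i}_0(s)$, and the scalar term commutes with the time-ordering, so that $\mathcal{T}e^{-i\int_0^T\tilde H(s)ds}=e^{\int_0^T {E_i}_0(s)ds}\,u(T)$. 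Because the prefactor $\lambda\equiv e^{\int_0^T {E_i}_0(s)ds}$ is a positive real number, it enters as $\lambda^2$ in both $N(O)=\lambda^2\langle\psi|u^\dagger(T)\,O\,u(T)|\psi\rangle$ and $D=\lambda^2\|u(T)|\psi\rangle\|^2$, and therefore cancels in the ratio, giving $N(O)/D=\langle O\rangle(T)$. This is the conceptual crux: the compensation required to make $H_i(s)-{E_i}_0(s)\geq 0$ (so that LCHS is applicable) is absorbed into a normalization that never has to be tracked, which is precisely what lets the scheme dispense with amplitude amplification.

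Next I would substitute the LCHS representation into $\tilde u(T)$ --- now legitimate because $H_i(s)-{E_i}_0(s)\geq 0$ --- writing $\tilde u(T)=\int_{\mathbb R}g(k)\,U(T,k)\,dk$ with $U(T,k)=\mathcal{T}e^{-i\int_0^T K_s(k)ds}$ the unitary built from the Hermitian generator $K_s(k)$ of the statement, and its adjoint $\tilde u^\dagger(T)=\int_{\mathbb R}g^*(k')\,U^\dagger(T,k')\,dk'$, where $U^\dagger(T,k')=\overline{\mathcal T}e^{i\int_0^T K_{s'}(k')ds'}$ is the anti-time-ordered inverse. Inserting $O$ between the two factors and interchanging the integrals with the matrix element (justified on the finite-dimensional space by $\|g\|_1<\infty$ and boundedness of the unitaries) turns $N(O)$ into $\int\!\int g^*(k')g(k)\,\langle O(k',k)\rangle\,dk'\,dk$, whose integrand matrix element is exactly the quantity $\langle[\cdot](k',k)\rangle$ defined in the theorem. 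Writing $g(k)=\|g\|_1\,\mathrm{Pd}(k)\,e^{i\theta(k)}$ recasts the kernel weight as $g^*(k')g(k)=\|g\|_1^2\,\mathrm{Pd}(k')\mathrm{Pd}(k)\,e^{i\theta(k',k)}$, so the integral becomes $\|g\|_1^2\,\mathbb E_{k',k}\langle O(k',k)\,e^{i\theta(k',k)}\rangle$ with $k,k'$ drawn independently from $\mathrm{Pd}$; the identical manipulation with $O=I$ gives $D=\|g\|_1^2\,\mathbb E_{k',k}\langle I(k',k)\,e^{i\theta(k',k)}\rangle$.

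Finally I would resolve the observable decomposition $O=\sum_n o_n O_n$ into a third layer of sampling: writing $o_n=\|O\|_{l_1}\,p_n\,e^{i\theta(n)}$ with $p_n=|o_n|/\|O\|_{l_1}$ converts $\langle O(k',k)\rangle=\sum_n o_n\langle O_n(k',k)\rangle$ into $\|O\|_{l_1}\,\mathbb E_n[e^{i\theta(n)}\langle O_n(k',k)\rangle]$, and nesting this inside the $(k',k)$ expectation produces the numerator $\|O\|_{l_1}\,\mathbb E_{k',k,n}\langle O_n(k',k)\,e^{i\theta(n)}\,e^{i\theta(k',k)}\rangle$. Dividing by $D$ cancels the common factor $\|g\|_1^2$ and reproduces Eq.~\eqref{O_est}, while the parenthetical identity follows by collapsing the $n$-expectation back into $\langle O(k',k)\rangle$. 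I expect the only genuine obstacles to be bookkeeping ones --- correctly generating the anti-time-ordering under the adjoint and justifying the interchange of the two $k$-integrals with the expectation --- whereas the substantive step is the scalar cancellation of the first part, which is what makes the separate unbiased estimation of $N(O)$ and $D$ meaningful.
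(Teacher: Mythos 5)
Your proposal is correct and follows essentially the same route as the paper's own proof in Appendix~\ref{proof_O_est}: there, too, the compensation factor $e^{\int_0^T {E_i}_0(s)\,ds}$ is inserted into numerator and denominator and cancels in the ratio, the LCHS representation of Eq.~\eqref{Cauthy_distribution_trans} is substituted for both time-ordered propagators, and the weights $g^*(k')g(k)$ and coefficients $o_n$ are rewritten as $\|g\|_1^2\,\mathrm{Pd}(k')\mathrm{Pd}(k)e^{i\theta(k',k)}$ and $\|O\|_{l_1}p_n e^{i\theta(n)}$ to obtain Eq.~\eqref{O_est}. The only differences are cosmetic: you run the scalar-cancellation step in the reverse direction (starting from $\tilde H$ rather than inserting the factors into the $H$-expression), you make the Fubini interchange explicit, and your form of the adjoint propagator, $\overline{\mathcal{T}}e^{+i\int_0^T \tilde H^\dagger(s)\,ds}$, is the consistent one.
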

\begin{proof}
The proof is provided in Appendix \ref{proof_O_est}.
\end{proof}

It is worth emphasizing that the estimation method described by the Thm. \ref{theorem_O_QMC} is unbiased and employs no approximation techniques, thereby reducing sources of systematic error. This distinguishes it from existing biased estimation approaches \cite{Huo2023}.
{The power of this framework lies in its structure. The classical computer handles all statistical sampling and summation, while the quantum computer is tasked only with what it does best: executing the unitary evolution $U(t,k)$ and measuring the resulting overlaps. Each run on the quantum device, regardless of the specific values of $k$ and $k'$, provides a useful quantum circuit that contributes to the final average. This elegantly sidesteps the post-selection problem inherent in many other approaches, where a single failed ancillary measurement can invalidate an entire computational run. Meanwhile, through sampling, various Hamiltonian simulation subroutines can be conveniently embedded \cite{Yang2021,Campbell2019,Nakaji2024,Granet2024}, and optimized by various error mitigation methods \cite{Yang2021,Yu2025}. } {Our QMC approach therefore alleviates the probability loss due to post-selection, reduces the quantum circuit depth, and reduces the number of additional ancillary qubits to at most one, making it inherently resource-efficient and especially well-suited for near-term quantum hardware.}

\textbf{{\label{complexity_analysis_theory}}\emph{Error and complexity analysis.---}}The algorithm will estimate the numerator (the part of $O_n(k',k)$) and denominator (the part of $I(k',k)$) of Eq.\eqref{O_est} respectively. There are two types of errors here in this algorithm, one is systematic error, and the other is statistical error. The systematic error is mainly caused by the limited resolution of quantum gates since the gate angles are defined and implemented using digital electronics \cite{Koczor2024,Granet2024}, and truncation errors since the domain of the integrand function $g(k)$ is $(-\infty,\infty)$ and thus it has to be truncated within a finite interval \cite{Huo2023}. The statistical error has two contributing factors, the fluctuation error caused by limited samplings, and the shots noise caused by limited measurements. 

\begin{theorem}\label{theorem_O_QMC_error}
Given that $\eta$ is the permissible error synthesized by the numerator part and the denominator part, $\delta$ is the statistical failure probability bound, and $\varepsilon$ is the given truncation error, denoting that $\hat{N}(O)=\|g\|_1^2\cdot\langle O(k',k)\cdot e^{i\theta(k',k)}\rangle, \overline{\hat{N}}(O)=\frac{1}{n_N}\sum_{i=1}^{n_N}\hat{N}(O)$, then the inequality 
\begin{align}\label{O_est_error}
\left|\frac{\overline{\hat{N}}(O)}{\overline{\hat{D}}}-\langle O\rangle\right|\leqslant \eta
\end{align}
hold with a probability higher than $1-\delta$. When Hamiltonian $\tilde{H}$ is time-dependent, the algorithm can be effectively implemented with a duration of $T=\mathcal{O}(1/\Delta_{E_i})$, $\Delta_{E_i}$ represents the maximum bandwidth of $\tilde{H}_i(t)=H_i(s)-{E_i}_0(s)$. The query complexity required to the above inequality is $k_c=\mathcal{O}\left(T \max _t\|\tilde{H}(t)\|\left(\log \left(\frac{1}{\varepsilon}\right)\right)^{1+1 / \beta}\right)$, the maximum sampling complexity is $n_N=n_D=\mathcal{O}\left(\frac{\log(\frac{1}{\delta})\cdot(\|O\|_{l_1}+1)^2\|g\|_1^4}{\eta^2}\right)$. When $\tilde{H}$ is time-independent, there is no restriction on the duration of the algorithm, and the query complexity required is $k_c=\mathcal{O}\left(\left(\log \left(\frac{1}{\varepsilon}\right)\right)^{1 / \beta}\right)$, the maximum sampling complexity is $n_N=n_D=\mathcal{O}\left(\frac{\log(\frac{1}{\delta})\cdot(\|O\|_{l_1}+1)^2\|g\|_1^4}{\eta^2p_g^2}\right)$, where $p_g\equiv |\langle\phi_g|\psi\rangle|^2$ is the non-zero overlap between the ground state $|\phi_g\rangle$ of $\tilde{H}_i$ and the initial state $|\psi\rangle$.

\end{theorem}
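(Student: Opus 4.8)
The plan is to separate the total deviation $|\overline{\hat N}(O)/\overline{\hat D}-\langle O\rangle|$ into a \emph{statistical} part, coming from the finite number of Monte Carlo samples and measurement shots, and a \emph{systematic} part, coming from truncating the integral over $k$ and from finite gate resolution; I would prove the high-probability bound \eqref{O_est_error} for the statistical part and control the systematic part through the truncation radius that also fixes the query complexity. Since Theorem~\ref{theorem_O_QMC} already guarantees that the per-sample estimators of the numerator $N(O)$ and of the denominator $D=N(I)$ are \emph{unbiased}, i.e. $\mathbb{E}[\overline{\hat N}(O)]=N(O)$ and $\mathbb{E}[\overline{\hat D}]=D$, the statistical part reduces to a concentration argument followed by error propagation through the ratio.

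First I would bound a single realization of each estimator. Because every $O_n$ is unitary (a Pauli string) its measured value lies in $[-1,1]$, and both phase factors $e^{i\theta(n)}$ and $e^{i\theta(k',k)}$ have unit modulus, so importance sampling over $n$ with $p_n=|o_n|/\|O\|_{l_1}$ makes a single numerator sample bounded in magnitude by $\|O\|_{l_1}\|g\|_1^2$ and a single denominator sample by $\|g\|_1^2$. Splitting each complex estimator into its real and imaginary parts and applying Hoeffding's inequality together with a union bound (assigning failure probability $\delta/4$ to each of the four parts) then gives, with probability at least $1-\delta$,
\begin{align}
\bigl|\overline{\hat N}(O)-N(O)\bigr|\lesssim \|O\|_{l_1}\|g\|_1^2\sqrt{\tfrac{\log(1/\delta)}{n_N}},\quad
\bigl|\overline{\hat D}-D\bigr|\lesssim \|g\|_1^2\sqrt{\tfrac{\log(1/\delta)}{n_D}}.
\end{align}
Writing $\overline{\hat N}=N+\Delta_N$, $\overline{\hat D}=D+\Delta_D$ and expanding the quotient gives, whenever $|\Delta_D|\le |D|/2$, an error governed by $|\Delta_N|/|D|+|\langle O\rangle|\,|\Delta_D|/|D|$; demanding each term be $\lesssim\eta$, using $|\langle O\rangle|\le\|O\|_{l_1}$, and inverting the Hoeffding bounds yields the sample sizes $n_N=n_D=\mathcal{O}\!\left(\log(1/\delta)(\|O\|_{l_1}+1)^2\|g\|_1^4/\eta^2\right)$, \emph{up to} the factor $1/|D|^2$.

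The main obstacle is exactly this factor $1/|D|^2$: the denominator $D=\|u(T)|\psi\rangle\|^2$ is the non-unitary normalization and may be small, so the ratio is ill-conditioned unless $D$ is bounded from below. I would treat the two regimes separately. For \emph{time-dependent} $\tilde H$, restricting the duration to $T=\mathcal{O}(1/\Delta_{E_i})$ keeps the effective imaginary-time decay $\mathcal{O}(1)$, so $|D|=\Omega(1)$ and the $1/|D|^2$ factor is absorbed into the constant, leaving the clean sample complexity. For \emph{time-independent} $\tilde H$ one effectively runs to the long-time limit, where the evolution projects onto the ground state $|\phi_g\rangle$ of $\tilde H_i$; then $|D|\gtrsim p_g=|\langle\phi_g|\psi\rangle|^2$, which is precisely the origin of the extra $1/p_g^2$ in the second sample-complexity estimate. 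Establishing these lower bounds on $|D|$ rigorously — in particular the convergence toward $p_g$ and its rate, and verifying the condition $|\Delta_D|\le|D|/2$ is met by the chosen $n_D$ — is the delicate step.

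Finally, for the query complexity and the systematic truncation error I would invoke the LCHS simulation cost of An~et~al.\ \cite{An2023,An2023a}. The kernel $f(k)$ decays like $e^{-c|k|^\beta}$, so truncating the $k$-integral to $|k|\le K$ with $K=\mathcal{O}\!\left((\log(1/\varepsilon))^{1/\beta}\right)$ controls the truncation error by $\varepsilon$. Each sampled unitary $U(T,k)=\mathcal{T}e^{-i\int_0^T K_s(k)\,ds}$ is realized by a Hamiltonian-simulation subroutine whose gate count scales as $T\max_s\|K_s(k)\|\lesssim T\max_t\|\tilde H(t)\|\,K$; combined with the high-precision $\log(1/\varepsilon)$ factor of the simulation algorithm this gives $k_c=\mathcal{O}\!\left(T\max_t\|\tilde H(t)\|(\log(1/\varepsilon))^{1+1/\beta}\right)$ in the time-dependent case. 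In the time-independent case the long-time evolution need not be simulated step by step: it is captured by an eigenstate-filtering/projection onto $|\phi_g\rangle$ whose cost is set by the polynomial degree resolving the $k$-truncation rather than by $T\max_t\|\tilde H\|$, leaving only $k_c=\mathcal{O}\!\left((\log(1/\varepsilon))^{1/\beta}\right)$. Collecting the statistical bound and the two systematic bounds then establishes the theorem.
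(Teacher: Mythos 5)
Your statistical analysis is essentially the paper's own: Hoeffding's inequality applied separately to the real and imaginary parts with single-sample bounds $\|g\|_1^2\|O\|_{l_1}$ (numerator) and $\|g\|_1^2$ (denominator), a union bound over the four parts, and the ratio expansion yielding an error of order $\left((\|O\|_{l_1}+1)\epsilon_D+\epsilon_N\right)/D$ under the conditioning assumption $\epsilon_D<D$; your treatment of the time-dependent case also matches the paper, which bounds $e^{-2\int_0^T({E_i}_{max}(s)-{E_i}_{0}(s))ds}\leqslant D$ and absorbs the resulting $e^{4T\Delta_{E_i}}$ factor by taking $T=\mathcal{O}(1/\Delta_{E_i})$. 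The first genuine gap is in your time-independent case: you derive $|D|\gtrsim p_g$ from "the long-time evolution projects onto $|\phi_g\rangle$" and flag the rate as delicate, but projection alone cannot deliver the claim — the ground-state component itself decays as $e^{-2({E_i}_{min}-{E_i}_{0})T}$, so without further input $D$ still vanishes exponentially in $T$ and the $1/p_g^2$ sampling complexity fails at large $T$. The missing idea, which is the paper's actual mechanism, is that ${E_i}_0$ is a \emph{free} compensation parameter: the paper writes $D\geqslant e^{-2({E_i}_{min}-{E_i}_{0})T}\cdot p_g$ and then \emph{chooses} ${E_i}_0$ so that $({E_i}_{min}-{E_i}_{0})T=\log(\mathcal{O}(1))$, making the exponential prefactor $\Theta(1)$ for arbitrary $T$. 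This tuning, not the projection per se, is what removes the duration restriction in the time-independent case, and your proposal never invokes it.

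The second gap concerns the time-dependent query complexity. The paper does not obtain $k_c=\mathcal{O}\left(T\max_t\|\tilde H(t)\|(\log(1/\varepsilon))^{1+1/\beta}\right)$ from the gate cost of simulating each $U(T,k)$, as you do; it obtains it as the number $M=2k_cQ/h$ of unitaries in a composite Gauss--Legendre quadrature of the truncated $k$-integral (Lemma 10 of Ref.\cite{An2023a}), with step $h=1/(eT\max_t\|\tilde H(t)\|)$ and $Q=\mathcal{O}(\log(1/\varepsilon))$ nodes, required because the time-ordering in $\mathcal{T}e^{-i\int_0^T K_s(k)ds}$ forces a discretization. The associated quadrature error, Eq.\eqref{quadrature}, is a systematic error source distinct from truncation, and your error budget never accounts for it: your gate-count heuristic happens to reproduce the same formula, but it conflates the cost of the Hamiltonian-simulation subroutine with the quantity the theorem calls the query complexity, and leaves the systematic error bound $\varepsilon$ unverified. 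Relatedly, your "eigenstate-filtering" explanation of the time-independent complexity is invented and unnecessary: in that case there is simply no time-ordering, hence no quadrature, and the query complexity reduces directly to the truncation radius $k_c=\mathcal{O}\left((\log(1/\varepsilon))^{1/\beta}\right)$ fixed by the kernel decay, exactly as in your truncation argument.
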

 
 \begin{proof}
   The proof of this theorem will be provided in Appendix \ref{complexity_analysis}.
 \end{proof}

Finally, by embedding any known Hamiltonian simulation subroutine, the final simulation can be achieved in Appendix \ref{Hamiltonian_simulation}. For reference, we recall the {Hamiltonian simulation without discretization errors (HSWDE)} \cite{Granet2024}, which also belongs to a class of quantum Monte Carlo methods. It is particularly worth noting that in this subroutine, the limited quantum gate resolution does not contribute to systematic errors. Thus, in the absence of sampling limitations, truncation error stands as the sole source of systematic error in theory. In summary, both systematic and statistical errors can be simultaneously reduced simply by increasing the number of samples. Moreover, this subroutine has been shown to be compatible with time-dependent Hamiltonians as well as non-sparse Hamiltonians, and is particularly well-suited for present-day relatively noisy quantum hardware. The effect of this subroutine can be seen in the Fig.\ref{Loschmidt_amplitude_QMC}.

\textbf{{\label{c_and_a}}\emph{Circuits and Algorithms.---}}The circuit of our algorithm is shown in Fig.\ref{hybrid_algorithm}, and the process of our algorithm for estimating the numerator $N(O)$ and the denominator $D$ are shown in Algorithm \ref{alg:ND}. Their measurements are carried out separately for the real and imaginary parts, as detailed in Appendix \ref{ND_measurements}. From the Fig.\ref{hybrid_algorithm}, we know that it requires at most one additional ancillary qubit.

\textbf{\label{applications}\emph{Applications.}---}Simulating open quantum systems has long been a challenging task, primarily due to the fact that their evolution follows non-unitary dynamics, governed by CPTP maps. Currently, there exist some methods for such simulations \cite{childs2016efficient,cleve2016efficient,li2022simulating,Kamakari2022,Ding2024}. In this section, we solve this problem based on the aforementioned theoretical framework, namely Algorithm \ref{alg:ND} constructed from Eq.\eqref{O_est}. Compared to previous algorithms, our approach does not require frequent measurements or resetting of ancillary qubits in the middle of the procedure \cite{cleve2016efficient,Kamakari2022,Han2021}, nor does it necessitate complex classical computations involving intermediate matrix variables or complex isomorphisms \cite{childs2016efficient,li2022simulating,Ding2024}.


 \begin{figure}[htb]
  \centering
  \includegraphics[width=0.95\linewidth]{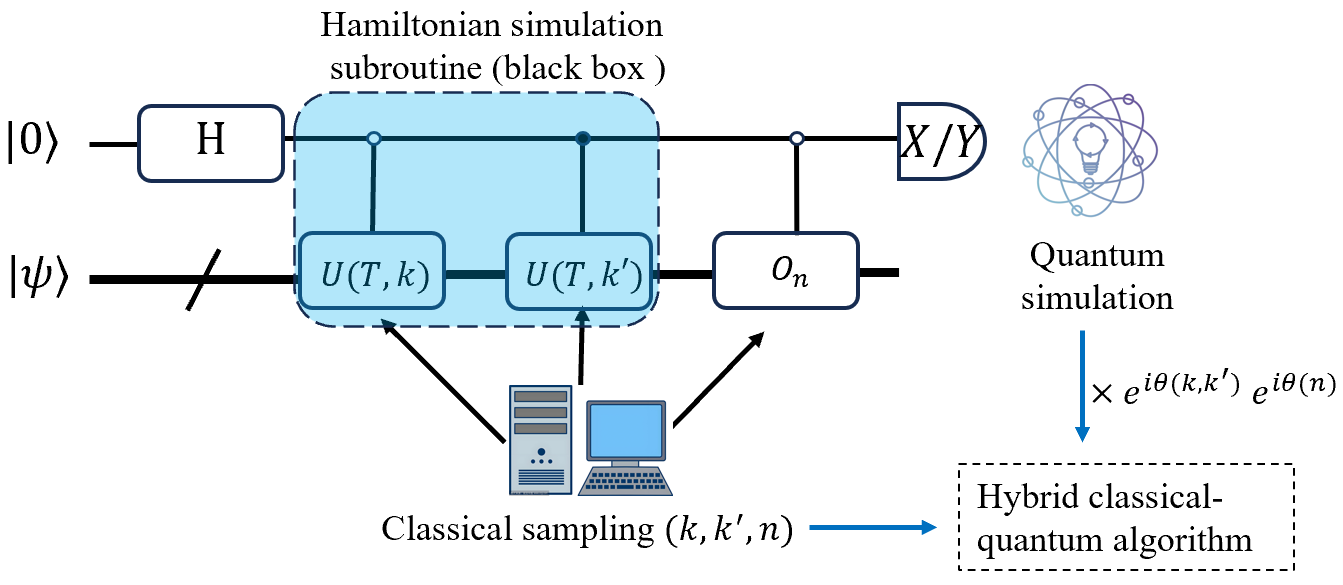}
  \caption{Hybrid classical-quantum algorithm for simulating arbitrary time-dependent non-Hermitian systems based on Quantum Monte Carlo. $X$ ($Y$) denotes the measurement basis, which is related to the real (imaginary) part of estimated value.}\label{hybrid_algorithm}
\end{figure}

The evolution equation of an open quantum system under Born-Markov approximation and rotating wave approximation can be expressed by Lindblad master equation \cite{breuer2002theory}: 
 \begin{equation}\label{master equation}
   \frac{\mathrm{d}\rho(t)}{\mathrm{d} t}=\mathcal{L}\rho(t)=-i[H(t),\rho(t)]+\sum_{\mu}\mathcal{D}[\Gamma_\mu]\rho(t),
 \end{equation}
where $\rho(t)$ is the density operator of the system, $\Gamma_\mu$ is the jump operator, $\mathcal{L}$ is the Liouvillian superoperator, which belongs the CPTP map, and $\mathcal{D}[\Gamma_\mu]$ is the dissipator related to the jump operator $\Gamma_\mu$, which is used to describe the dissipation:
$
  \mathcal{D}\left[\Gamma_\mu\right] \rho(t)=\Gamma_\mu \rho(t) \Gamma_\mu^\dag-\frac{\Gamma_\mu^\dag \Gamma_\mu}{2} \rho(t)-\rho(t) \frac{\Gamma_\mu^\dag \Gamma_\mu}{2}.
$

Though the vectorization method, the state $\rho(t)=\sum_{i,j}\rho_{ij}|i\rangle\langle j|$ will be mapped to a vector $|\rho(t)\rangle\!\rangle=\sum_{i,j}\rho_{ij}|i\rangle|j^*\rangle$, which is an unnormalized quantum state, and can be normalized to a legal quantum state $|\rho(t)\rangle=|\rho(t)\rangle\!\rangle/\||\rho(t)\rangle\!\rangle\|=|\rho(t)\rangle\!\rangle/\|\rho(t)\|_F$, where "F" denotes the Frobenius norm, and $\||\rho(t)\rangle\!\rangle\|=\|\rho(t)\|_F$. Then the superoperator process, such as $A\cdot\rho\cdot B$, can be mapped to $ A\otimes B^\mathrm{T}\cdot |\rho\rangle\!\rangle$. Therefore, based on the vectorization techniques, non-physical processes can also be well characterized (see the Appendix in Ref.\cite{Minganti2019} for details), and when combined with the method we proposed in Eq.\eqref{O_est}, these processes can also be simulated. It is worth noting that  the denominator part happens not to appear in the simulation of open quantum systems. It is worth noting that, we have shown that it renders unnecessary any compensation by a time-exponential constant $e^{c_p t}$ if the jump operators are normal operators, typical examples include bit flip noise (X), phase flip noise (Z), and bit-phase flip noise (Y). The details can be seen in Appendix \ref{open_systems_vec}.

As an example, we consider a 4-qubit transverse Ising model with periodic boundary conditions subjected to a dissipation process of the amplitude damping (AD),
\begin{align}
H=-J\underset{\langle i,j\rangle}\sum Z_iZ_j-h\underset{j}\sum X_j.
\end{align}
For convenience, we assume that the AD noise only acts on the first qubit, and the jump operator of AD noise is $\Gamma=\sqrt{\gamma}|0\rangle_1\langle1|$, where $\gamma=1.5$. This model is shown in Fig.\ref{Ising_model_4qubits}.
 \begin{figure}[htb]
  \centering
  \includegraphics[width=0.45\linewidth]{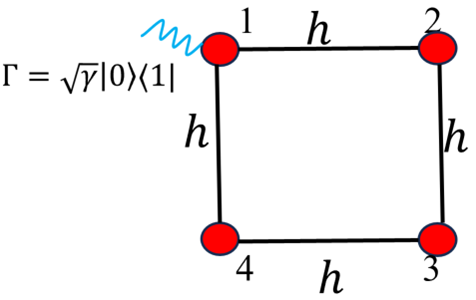}
  \caption{The 4-qubit transverse Ising model with periodic boundary conditions subjected to dissipation (amplitude damping noise). For convenience, the dissipation only acts on the first qubit.}\label{Ising_model_4qubits}
\end{figure}
\begin{figure}[htb]
  \centering
  \includegraphics[width=0.95\linewidth]{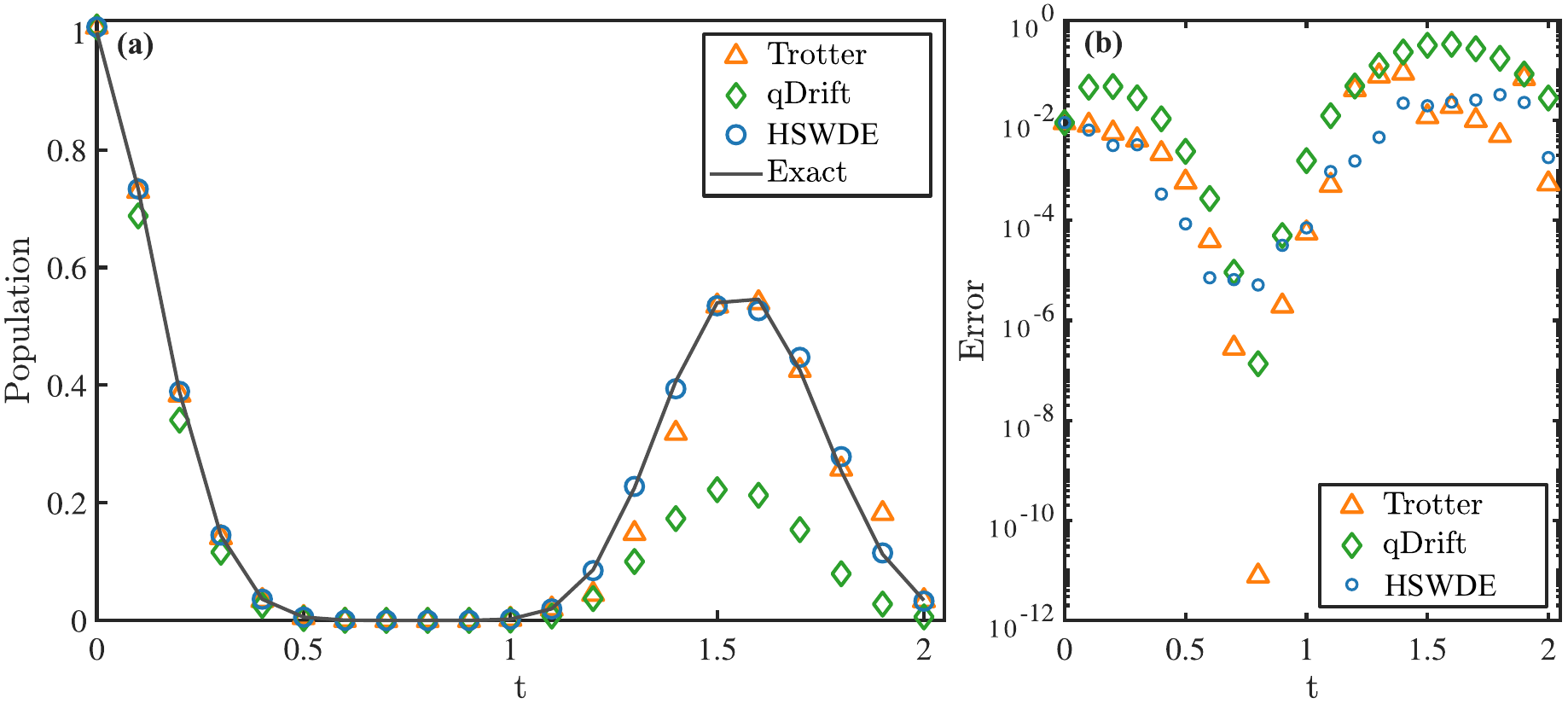}
  \caption{Dynamics simulation of a 4-qubit transverse Ising model with periodic boundary conditions subjected to dissipation. The related parameters in the model are $h=2, J=1$, the initial state $|\psi\rangle=|1000\rangle$, the population observable is $O=|1000\rangle\langle1000|$, and the jump operator of amplitude damping noise $\Gamma=\sqrt{\gamma}|0\rangle_1\langle1|$, where $\gamma=1.5$. (a) The relation between population and time $t$. The simulation duration is $T=2$. The time step of both Trotter, qDrift subroutines is $\Delta t=0.05$, while for HSWDE, the minimum rotation angle (time step) is $\tau=0.05$, the compensation coefficient $c_p=0.3607$, the number of samples $N_s=10^5$ in the HSWDE subroutine. (b) The relation between the absolute error and time $t$. The vertical axis scale is logarithmic.The codes are available
 in \cite{xiaogang_2025_17309084}.}\label{Loschmidt_amplitude_QMC}
\end{figure}

\begin{algorithm}[H]
\caption{\centering{The numerator $N(O)$ (denominator $N(I)$, or $D$) estimation} }
\label{alg:ND}
\begin{algorithmic}[1]
\Require The initial state $|\psi_0\rangle$; the duration time $T$ of simulation; the integrand function $g(k)$; the samples $(k',k)$, so the phase part $e^{i\theta(k',k)}$, $U^\dagger(T,k')$ and $U(T,k)$ given in Eq.\eqref{O_est}; the failure probability bound $\delta$. 
\Ensure  Estimation of the numerator $N(O)$ ($D$) with the failure probability not higher than $\delta$.
\For {$n = 1$ to $n_N$ ($n_D$)}  \Comment{When estimating the denominator $D$, the observable $O$ in $N (O)$ is replaced by $I$.}
    \State Sample two non-negative random numbers $k', k$ separately according to the probability density $\mathrm{Pd}(k)=|g(k)|/\|g(k)\|_{l_1}$.
    \State Sample an unitary observable $O_n$ according to the sampling probability $p_n=|o_n|/\|O\|_{l_1}$.
    \State Apply $U(T,k)$, $O_n$, and $U^\dagger(T,k')$ in order to the initial state $|\psi\rangle$.
    \State Measure the above quantum circuit  by separating into real ($X$ measurement) and imaginary parts ($Y$ measurement) according to Eq.\eqref{XY_measurement}, and record the measurement result.
    \State Multiply by the sampled phase $e^{i\theta(k,k')}$, $e^{i\theta(n)}$ and the constant $\|g\|_1^2\|O\|_{l_1}$, and record it as $\hat{N}(O)$.
\EndFor
\State Calculate the estimator $\overline{\hat{N}}(O)$ as an estimation of $N(O)$ according to the Eq.\eqref{n_N_average}.
\end{algorithmic}
\end{algorithm}

In Fig.\ref{Loschmidt_amplitude_QMC}, we give the simulation results.  The model under investigation incorporates specific parameters detailed as follows: $h = 2$, $J = 1$, with an initial state $|\psi\rangle = |1000\rangle$. The population observable is defined as $O = |1000\rangle\langle1000|$. When implementing specific sampling $U(T,k)$, we used three methods, 1-order Trotter (red triangle) \cite{Yang2021}, qDrift (green diamond) \cite{Campbell2019,Nakaji2024}, and HSWDE subroutine (blue circle) \cite{Granet2024} respectively. The relation between population and time $t$ is given in Fig.\ref{Loschmidt_amplitude_QMC}(a).

As seen in Fig.\ref{Loschmidt_amplitude_QMC}(a), all three methods perform well before the simulation duration of $t=1$. However, beyond this point, the qDrift method deviates significantly from the exact values, while the other two methods exhibit better performance. This phenomenon aligns with the results presented in the Fig.3 of Ref.\cite{Granet2024}. Moreover, it is evident that, overall, within the simulation process, the HSWDE subroutine generally performs the best, only with few exceptions. This is clearly illustrated in Fig.\ref{Loschmidt_amplitude_QMC}(b), which depicts the relation between error and time $t$. The smaller errors imply that, for a specific precision, we can utilize shallower quantum circuits. This is the primary reason we have emphasized the HSWDE subroutine in the preceding text. It is important to note that, unlike the other two methods, the HSWDE subroutine does not suffer from systematic errors due to time discretization. More accurately, the errors arising from time discretization in this method can be automatically alleviated by simply increasing the number of samples, thus avoiding the impact of limited resolution due to finite qubit numbers in practical quantum gates. In contrast, the other two methods inevitably suffer from systematic errors due to time discretization. It should be noted that the unusually small error observed for the Trotter method near $t=0.8$ is a fortuitous outcome.

\textbf{\label{conclusions}\emph{Conclusions.}---}In {summary}, we have {introduced} a general hybrid {quantum-classical} algorithmic framework based on QMC for simulating the dynamics of arbitrary time-dependent non-Hermitian systems, {along with} its error and complexity analyses. {Our framework recasts non-unitary evolution as a statistical sampling of unitary evolutions. This strategy avoids the post-selection bottleneck common to many quantum algorithms, reduces circuit depth, and uses at most one additional ancillary qubit. Consequently, the method is inherently resource-efficient and well-adapted to the constraints of near-term hardware.} This algorithm can be applied to simulate any time-dependent non-Hermitian system dynamics, including PT-symmetric systems, {non-CPTP maps}, and open quantum systems, without assuming that the Hamiltonian is {of a special form (e.g.,} anti-Hermitian or diagonalizable{)}, {as required by methods like} the quantum imaginary time evolution algorithm, {of which our work can be seen as a natural generalization}. This algorithm combines the advantages of both classical and quantum computation, requiring {minimal} ancillary qubits {while accommodating} shallow quantum circuits. To verify the effectiveness of our method, we applied it to the task of simulating {a multi-qubit} open quantum system and achieved {results in excellent agreement with exact solutions}.

{Furthermore, the versatility} of our method {opens several avenues for future research and application.} We anticipate that it will find {value} in more applications, such as solving ground state energies and simulating {other complex} processes. From an optimization perspective, {our work also provides practical guidance for implementation}. When performing real simulations on quantum computers, Hamiltonian simulation needs to be used as a subroutine within our framework. In our application, we used the 1st-order Trotter method, the qDrift method, and the HSWDE method as subroutines, and found that {the HSWDE method generally offers superior accuracy, particularly when the number of samples is moderate.} {This advantage stems from its ability to systematically reduce both systematic and statistical errors by simply increasing the sample size, a feature not shared by the other methods tested}. Therefore, we recommend the HSWDE method as the preferred approach for our Hamiltonian simulation subroutine. A natural next step would be to incorporate various quantum error mitigation techniques {into our framework} to achieve even shallower quantum circuit depths {and enhance performance on noisy quantum devices}.

\textbf{\label{Acknowledgements}\emph{Acknowledgements.}---}This work is supported by Beijing Natural Science Foundation (Grant No. 1254053). the National Natural Science Foundation of China (Grant No.~12361161602 and No.~12175003),  NSAF (Grant No.~U2330201), the Innovation Program for Quantum Science and Technology (Grant No.~2023ZD0300200).

\bibliography{NHS_QMC.bib}
\clearpage
\onecolumngrid
\appendix

\section{The kernel functions  \label{kernel_functions}}

Initially, Dong An et al.~\cite{An2023, An2023a} found and strictly proved that the kernel function can be taken as
 \begin{align}\label{Cauthy_distribution}
  f(k)=\frac{1}{\pi(1+ik)}, 
 \end{align}
 then the integrand function $g(k)=\frac{1}{\pi(1+k^2)}$ is the standard Cauchy distribution function, and can be generated effectively and directly by the inverse transform sampling method. However, this integrand has the property: $\mathbb{E}k=0, \mathbb{E}|k|=\infty, \mathbb{E}k^2=\infty$, so in actual use, it is usually necessary to truncate it. Given a truncation precision $\varepsilon$, we can solve the defined truncation length $k_c$ in the following way:
 \begin{align}
 \int_{-k_c}^{k_c} \left|\frac{f(k)}{1-ik}\right|dk=1-\varepsilon.
 \end{align}
 For the kernel function given in Eq.\eqref{Cauthy_distribution}, $k_c=\frac{1}{\tan(\frac{\pi}{2}\varepsilon)}=\mathcal{O}(\frac{1}{\varepsilon})$. 
 
 Later, Dong An et al. put forward a set of improved kernel functions to make the integrand converge faster,
 \begin{align}
 f(k)=\frac{2}{2\pi e^{-2\beta}e^{(1+ik)^\beta}}, \beta \in (0,1).
 \end{align}
This new kernel function decays at a near-exponential rate of $e^{-c|k|^\beta}$, so the truncation length can be reduced from $\mathcal{O}(1/\varepsilon)$ to $\mathcal{O}\left((\log(1/\varepsilon))^{1/\beta}\right)$. This important improvement can further reduce the circuit depth of our algorithm. However, it should be noted that in Eq.\eqref{Cauthy_distribution}, $g(k)$ is real, while it is complex here, which may increase the difficulty of classical sampling. Some important classical Monte Carlo algorithms can be suggested to generate such complex samples, such as the acceptance-rejection sampling algorithm or the Metropolis algorithm, or more directly, use trapezoidal rules to generate such complex probability distributions, but this may introduce additional discretization errors.
\section{\label{proof_O_est}Proof of Theorem \ref{theorem_O_QMC}}
\begin{proof}
Considering the Eq.\eqref{Cauthy_distribution_trans}, we know that
\begin{align}\label{O_est_proof}
\langle O\rangle=&\frac{\langle\psi|\mathcal{\overline{T}} e^{-i\int_0^T H^\dagger(s) d s}\cdot O\cdot\mathcal{T} e^{-i\int_0^T H(s) d s}|\psi\rangle }{\langle\psi|\mathcal{\overline{T}} e^{-i\int_0^T H^\dagger(s) d s}\cdot I\cdot\mathcal{T} e^{-i\int_0^T H(s) d s}|\psi\rangle } \nonumber\\
=&\frac{\langle\psi|\left(e^{\int_{0}^{T}{E_i}_0(s)ds}\cdot\mathcal{\overline{T}} e^{-i\int_0^T H^\dagger(s) d s}\right)\cdot O\cdot \left(e^{\int_{0}^{T}{E_i}_0(s)ds}\cdot\mathcal{T} e^{-i\int_0^T H(s) d s}\right)|\psi\rangle }{\langle\psi|\left(e^{\int_{0}^{T}{E_i}_0(s)ds}\cdot\mathcal{\overline{T}} e^{-i\int_0^T H^\dagger(s) d s}\right)\cdot I\cdot \left(e^{\int_{0}^{T}{E_i}_0(s)ds}\cdot\mathcal{T} e^{-i\int_0^T H(s) d s}\right)|\psi\rangle } \nonumber\\
=&\frac{\langle\psi|\mathcal{\overline{T}} e^{-i\int_0^T \tilde{H}^\dagger(s) d s}\cdot O\cdot\mathcal{T} e^{-i\int_0^T \tilde{H}(s) d s}|\psi\rangle }{\langle\psi|\mathcal{\overline{T}} e^{-i\int_0^T \tilde{H}^\dagger(s) d s}\cdot I\cdot\mathcal{T} e^{-i\int_0^T \tilde{H}(s) d s}|\psi\rangle } \nonumber\\
=&\frac{\langle\psi|\int_{\mathbb{R}} g^*(k') [\mathcal{\overline{T}} e^{i \int_0^T K_{s'}(k') d s'} ]dk'\cdot O\cdot \int_{\mathbb{R}} g(k) [\mathcal{T} e^{-i \int_0^T K_s(k) d s} ]dk|\psi\rangle }{\langle\psi|\int_{\mathbb{R}} g^*(k') [\mathcal{\overline{T}} e^{i \int_0^T K_{s'}(k') d s'} ]dk'\cdot I\cdot \int_{\mathbb{R}} g(k) [\mathcal{T} e^{-i \int_0^T K_s(k) d s} ]dk|\psi\rangle } \nonumber\\
=&\frac{\|O\|_{l_1}\cdot\mathbb{E}_n\int_{\mathbb{R}}\int_{\mathbb{R}} g^*(k')g(k)\cdot \langle O_n(k',k) \cdot e^{i\theta(n)}\rangle\cdot dk'dk}{\int_{\mathbb{R}}\int_{\mathbb{R}} g^*(k')g(k)\cdot \langle I(k',k)\rangle\cdot dk'dk} \nonumber\\
=&\frac{\|O\|_{l_1}\cdot\|g\|_1^2\cdot\mathbb{E}_{n,k',k}\langle O_n(k',k)\cdot e^{i\theta(n)}\cdot e^{i\theta(k',k)}\rangle}{\|g\|_1^2\cdot\mathbb{E}_{k',k}\langle I(k',k)\cdot e^{i\theta(k',k)}\rangle} \nonumber\\
=&\frac{\|O\|_{l_1}\cdot\mathbb{E}_{k',k,n}\langle O_n(k',k)\cdot e^{i\theta(n)}\cdot e^{i\theta(k',k)}\rangle}{\mathbb{E}_{k',k}\langle I(k',k)\cdot e^{i\theta(k',k)}\rangle}\\= &\frac{\mathbb{E}_{k',k}\langle O(k',k)\cdot e^{i\theta(k',k)})\rangle}{\mathbb{E}_{k',k}\langle I(k',k)\cdot e^{i\theta(k',k)}\rangle}.
\end{align}

Noting that when $\tilde{H}$ is time-independent, $\langle [\cdot](k',k)\rangle=  \langle\psi|e^{i K(k')T}[\cdot]e^{-i K(k)T}|\psi\rangle$.
\end{proof}

\section{\label{complexity_analysis} Proof of Theorem \ref{theorem_O_QMC_error} (Error and complexity analysis)}
\begin{proof}
In the following, we present the error and complexity analysis, thus completing the proof of Theorem \ref{theorem_O_QMC_error}.
\subsubsection{Truncation error, quadrature error and the query complexity}
According to the conclusions of Lemma 9 in Ref.\cite{An2023a}, we know that the truncation error incurred in approximating the non-unitary evolution $\mathcal{T} e^{-i\int_0^T \tilde{H}(s) d s}$ in theorem \ref{theorem_O_QMC_error} can be bounded as
\begin{align}
\left\|\int_{\mathbb{R}} g(k) U(T, k) \mathrm{d} k-\int_{-k_c}^{k_c} g(k) U(T, k) \mathrm{d} k\right\| \leq \frac{2^{\lceil 1 / \beta\rceil+1}\lceil 1 / \beta\rceil!}{C_\beta(\cos (\beta \pi / 2))^{\lceil 1 / \beta\rceil}} \frac{1}{k_c} e^{-\frac{1}{2} {k_c}^\beta \cos (\beta \pi / 2)},
\end{align}
where $c_\beta=2\pi e^{-2^\beta}$, and $U(T, k)=\mathcal{T} e^{-i \int_0^T[H_r(s)+k (H_i(s)-{E_i}_0(s))] \mathrm{d} s}$.

To bound the above truncation error by $\varepsilon$, make the right-hand side of the above equation to be less than or equal to $\epsilon$, then we can obtain
\begin{align}
k_c=\mathcal{O}\left(\left(\log \left(\frac{1}{\varepsilon}\right)\right)^{1 / \beta}\right),
\end{align}
which is seen as the query complexity of our algorithm when the Hamiltonian $\tilde{H}(t)$ is time-independent.

In addition, when the Hamiltonian $\tilde{H}$ is time-dependent, the evolution operator typically contains a time-ordering operator $\mathcal{T}$, necessitating a discretization of time, which inevitably introduces an additional error commonly referred to as the quadrature error. According to the conclusions of Lemma 10 in Ref.\cite{An2023a}, the quadrature error can be bounded as
\begin{align}\label{quadrature}
\left\|\int_{-k_c}^{k_c} g(k) U(T, k) \mathrm{d} k-\sum_{m=-k_c / h}^{k_c / h-1} \sum_{q=0}^{Q-1} c_{q, m} U\left(T, k_{q, m}\right)\right\| &\leqslant \frac{8}{3 C_\beta} k_c h^{2 Q}\left(\frac{e T \max _t\|\tilde{H}_i(t))\|}{2}\right)^{2 Q} \nonumber\\
&\leqslant \frac{8}{3 C_\beta} k_c h^{2 Q}\left(\frac{e T \max _t\|\tilde{H}(t)\|}{2}\right)^{2 Q}. 
\end{align}
In the above equation, the Gauss-Legendre quadrature is utilized, where $Q$ denotes the number of quadrature nodes. Here, $h$ is the step size used in the composite quadrature rule, and $h$ can be chosen to be a suitable value such that $k_c/h$ is an integer. And the $k_{q,m}$’s are the Gaussian nodes, $c_{q,m} = w_qg(k_{q,m})$, and the $w_q$’s are the Gaussian weights. It is worth emphasizing that both $k_{q,m}$ and $w_q$ can be considered as known quantities (by consulting tables), which are retrieved through the Gauss-Legendre quadrature methods.

By bounding the above quadrature error by $\varepsilon$, the parameter values can be obtained as
\begin{align}
h=\frac{1}{e T \max _t\|\tilde{H}(t)\|}, \quad Q=\left\lceil\frac{1}{\log 4} \log \left(\frac{8}{3 C_\beta} \frac{k_c}{\varepsilon}\right)\right\rceil=\mathcal{O}\left(\log \left(\frac{1}{\varepsilon}\right)\right),
\end{align}
so the overall number of unitaries $U(T, k_{q, m})$ in the Eq.\eqref{quadrature} is
\begin{align}
M=\frac{2 k_c Q}{h}=\mathcal{O}\left(T \max _t\|\tilde{H}(t)\|\left(\log \left(\frac{1}{\varepsilon}\right)\right)^{1+1 / \beta}\right),
\end{align}
which can be seen as the query complexity of our algorithm when the Hamiltonian $\tilde{H}(t)$ is time-dependent.

\subsubsection{Statistical error and the sampling complexity}
Next, we will discuss estimating the numerator and the denominator in Eq.\eqref{O_est} (the numerator is recorded as $N(O)$ here and hereafter, the denominator can be regarded as $N(I)$, but for the sake of easy distinction, we also denote it as $D$). Recalling that
\begin{align}\label{n_N_average}
\hat{N}(O)&=\|g\|_1^2\cdot\langle O(k',k)\cdot e^{i\theta(k',k)}\rangle, \nonumber\\
\overline{\hat{N}}(O)&=\frac{1}{n_N}\sum_{i=1}^{n_N}\hat{N}(O),
\end{align}
where $n_N$ denotes $n_N$ independent sampling processes with independent identical distribution $\mathrm{Pd}(\cdot)$ ($\mathrm{Pd}(k',k)=\mathrm{Pd}(k')\cdot\mathrm{Pd}(k)$), so it is obvious $\mathbb{E}{\overline{\hat{N}}(O)}=N(O)$. For convenience, we have omitted the parameters $k'$ and $k$. Then the estimator of $\langle O\rangle$ can be denoted by $\langle\hat{O}\rangle\equiv\overline{\hat{N}}(O)/\overline{\hat{D}}$.

$\hat{N}(O)$ and $\overline{\hat{N}}(O)$ are usually complex numbers, and their estimation needs to be divided into real part $\Re[\cdot]$ and imaginary part $\Im[\cdot]$ separately. For the estimation precision $\epsilon_N$ of $N(O)$ ($N(I)$), according to the Hoeffding inequality (see the Eq.(111) in Ref.\cite{Zeng2021}), and considering that $\Re\overline{\hat{N}}(O)$, $\Im\overline{\hat{N}}(O)$, $\in \left[-\|g\|_1^2\cdot\|O\|_{l_1},\quad \|g\|_1^2\cdot\|O\|_{l_1}\right]$, we know
\begin{align}
\mathrm{Pr}(|\Re\overline{\hat{N}}(O)-\mathbb{E}\Re\overline{\hat{N}}(O)|&\geqslant\epsilon_{N,\mathrm{Re}})\leqslant 2\exp{\left(-\frac{n_{N,\mathrm{Re}}\cdot\epsilon_{N,\mathrm{Re}}^2}{2\|g\|_1^4\cdot\|O\|_{l_1}^2}\right)}, \nonumber\\
\mathrm{Pr}(|\Im\overline{\hat{N}}(O)-\mathbb{E}\Im\overline{\hat{N}}(O)|&\geqslant\epsilon_{N,\mathrm{Im}})\leqslant 2\exp{\left(-\frac{n_{N,\mathrm{Im}}\cdot\epsilon_{N,\mathrm{Im}}^2}{2\|g\|_1^4\cdot\|O\|_{l_1}^2}\right)},
\end{align}
so the sample number of the numerator will be
\begin{align}\label{n_N}
n_{N,\rm{Re}}\geqslant \frac{K_{N,\mathrm{Re}}\cdot \|g\|_1^4\|O\|_{l_1}^2}{\epsilon_{N,\mathrm{Re}}^2},\nonumber\\
n_{N,\rm{Im}}\geqslant \frac{K_{N,\mathrm{Im}}\cdot \|g\|_1^4\|O\|_{l_1}^2}{\epsilon_{N,\mathrm{Im}}^2},
\end{align}
with a failure probability 
\begin{align}\label{delta_N}
\delta_{N,\mathrm{Re}}\leqslant 2\exp{\left(-\frac{K_{N,\mathrm{Re}}}{2}\right)},\nonumber\\
\delta_{N,\mathrm{Im}}\leqslant 2\exp{\left(-\frac{K_{N,\mathrm{Im}}}{2}\right)},
\end{align}
where $K_{N,\rm{Re}}>0$ and  $K_{N,\rm{Im}}>0$ represent the manually chosen parameter that regulates both the quantity of samples and the failure probability. Usually, we can set  $\epsilon_{N,\rm{Re}}=\epsilon_{N,\rm{Im}}=\epsilon_N/\sqrt{2}$, $K_{N,\rm{Re}}=K_{N,\rm{Im}}= K_N$, and $\delta_{N,\mathrm{Re}}=\delta_{N,\mathrm{Im}}=\delta_{N}/2$, then 
\begin{align}
\mathrm{Pr}(|\overline{\hat{N}}(O)-N(O)|\geqslant\epsilon_{N})\leqslant 1-(1-\delta_{N,\rm{Re}})(1-\delta_{N,\rm{Im}})\leqslant \delta_N.
\end{align}

Overall, taking into account the failure probabilities of numerator $N(O)$ estimation and denominator $D$ estimation, the estimated failure probability for $\langle O\rangle=N(O)/D(O)$ is
\begin{align}\label{delta}
\delta\leqslant1-(1-\delta_D)(1-\delta_N)\leqslant4\exp{(-\frac{K_D}{2})}+4\exp{(-\frac{K_N}{2})}.
\end{align}
Traditionally, we often take $K_D=K_N$, then $K_D=K_N\leqslant2\ln(8/\delta)$.

Following an analysis of the sampling complexity associated with both the numerator and the denominator, an estimation of the total error for $\langle O\rangle$ can be carried out,
\begin{align}
\left|\langle O\rangle-\langle\hat{O}\rangle\right| & =\left|\frac{N(O)}{D}-\frac{\overline{\hat{N}}(O)}{\overline{\hat{D}}}\right| \nonumber\\
& =\left|\frac{N(O) \overline{\hat{D}}-D \overline{\hat{N}}(O)}{D \hat{D}}\right| \nonumber\\
& =\left|\frac{N(O)\left(D+(\overline{\hat{D}}-D)\right)-D\left(N(O)-(\overline{\hat{N}}(O)-N(O))\right)}{D\left(D+(\overline{\hat{D}}-D)\right)}\right| \nonumber \\
& =\left|\frac{N(O) (\overline{\hat{D}}-D)+D \left(\overline{\hat{N}}(O)-N(O)\right)}{D\left(D+(\overline{\hat{D}}-D)\right)}\right| \nonumber\\
& =\left|\frac{\langle O\rangle (\overline{\hat{D}}-D)+ \left(\overline{\hat{N}}(O)-N(O)\right)}{D+(\overline{\hat{D}}-D)}\right| \nonumber\\
& \leqslant\frac{|\langle O\rangle|\epsilon_D+\epsilon_N }{D-\epsilon_D} \nonumber\\
&\leqslant\frac{(|\langle O\rangle|+1)\epsilon_D+\epsilon_N }{D} \nonumber\\
&\leqslant\frac{(\|O\|_{l_1}+1)\epsilon_D+\epsilon_N }{D},
\end{align}
where we have reasonably assumed that $\epsilon_D<D$. In practice, $\epsilon_D$ is usually taken as $\epsilon$, $\epsilon_N$ is usually taken as $\|O\|_{l_1}\epsilon$, so in order to make the above upper bound of the error smaller or equal to the permissible error $\eta$ synthesized by the numerator and the denominator, it should be satisfied that
\begin{align}\label{nDnN}
\epsilon\leqslant\frac{D}{(2\|O\|_{l_1}+1)}\eta\Longrightarrow \epsilon_D\leqslant\frac{D}{(2\|O\|_{l_1}+1)}\eta, \epsilon_N\leqslant\frac{\|O\|_{l_1}D}{(2\|O\|_{l_1}+1)}\eta.
\end{align}
Then according to Eq.\eqref{n_N} we know that $n_D$, $n_N$ should meet the following condition 
\begin{align}
n_{D,\rm{Re}}&\geqslant \frac{K_{D,\rm{Re}}\cdot \|g\|_1^4\|I\|_{l_1}^2 }{\epsilon_{D,\rm{Re}}^2}=\frac{K_{D,\rm{Re}}\cdot(2\|O\|_{l_1}+1)^2\|g\|_1^4}{2D^2\eta^2 }, \nonumber\\
n_{D,\rm{Im}}&\geqslant \frac{K_{D,\rm{Im}}\cdot \|g\|_1^4\|I\|_{l_1}^2 }{\epsilon_{D,\rm{Im}}^2}=\frac{K_{D,\rm{Im}}\cdot(2\|O\|_{l_1}+1)^2\|g\|_1^4}{2D^2\eta^2 }, \nonumber\\
n_{N,\rm{Re}}&\geqslant \frac{K_{N,\rm{Re}}\cdot \|g\|_1^4\|O\|_{l_1}^2}{\epsilon_{N,\rm{Re}}^2}=\frac{K_{N,\rm{Re}}\cdot(2\|O\|_{l_1}+1)^2\|g\|_1^4}{2D^2\eta^2 },  \nonumber\\
n_{N,\rm{Im}}&\geqslant \frac{K_{N,\rm{Im}}\cdot \|g\|_1^4\|O\|_{l_1}^2}{\epsilon_{N,\rm{Im}}^2}=\frac{K_{N,\rm{Im}}\cdot(2\|O\|_{l_1}+1)^2\|g\|_1^4}{2D^2\eta^2 }.
\end{align}
In addition, form the property of the norm we know that
\begin{align}
e^{-2\int_{0}^{T}({E_i}_{max}(s)-{E_i}_{0}(s))ds}\leqslant D=\langle\psi|\mathcal{\overline{T}} e^{-i\int_0^T \tilde{H}^\dagger(s) d s}\cdot I\cdot\mathcal{T} e^{-i\int_0^T \tilde{H}(s) d s}|\psi\rangle \leqslant e^{-2\int_{0}^{T}({E_i}_{min}(s)-{E_i}_{0}(s))ds}, \nonumber\\
\end{align}
where ${E_i}_{min(max)}(s)$ is the minimum (maximum) eigenvalue of $H_i(s)$, and in fact, ${E_i}_{min}(s)$ is the ground state eigenvalue of $H_i(s)$, and they can always be estimated within a reasonable interval by some algorithms \cite{Yuan2021,Huo2023,Zhang2024}. Denoting that $E_{avg}(T)=\int_{0}^{T}({E_i}_{max}(s)-{E_i}_{min}(s))ds/T\leqslant \Delta_{E_{i}}$, where $\Delta_{E_i}=\underset{s\in[0,T]}{\max}\{{E_i}_{max}(s)-{E_i}_{min}(s)\}$. It is worth noting that $E_{avg}(T)$ is a structural constant of the system $H(t)$, and its physical meaning is the average bandwidth of the anti-Hermitian part of the Hamiltonian $H(t)$, and $\Delta_{E_i}$ represents the maximum bandwidth of $H_i(t)$.

Considering ${E_i}_{min}(s)-{E_i}_{0}(s)\geqslant0$, and the failure probability given in Eq.\eqref{delta_N} and Eq.\eqref{delta}. Then the maximum sampling number can be taken as
\begin{align}\label{complexity_DN}
n_D&=\frac{2\ln(\frac{8}{\delta})\cdot(2\|O\|_{l_1}+1)^2\|g\|_1^4 e^{4T\Delta_{E_i} }}{\eta^2}=\mathcal{O}\left(\frac{\log(\frac{1}{\delta})\cdot(\|O\|_1+1)^2\|g\|_1^4}{\eta^2}\right), \nonumber\\
n_N&=\frac{2\ln(\frac{8}{\delta})\cdot(2\|O\|_{l_1}+1)^2\|g\|_1^4 e^{4T\Delta_{E_i} }}{\eta^2}=\mathcal{O}\left(\frac{\log(\frac{1}{\delta})\cdot(\|O\|_{l_1}+1)^2\|g\|_1^4}{\eta^2}\right), 
\end{align}
where we have assumed that the duration of the algorithm is $T=\mathcal{O}(1/\Delta_{E_i})$. The Eqs.\eqref{complexity_DN} are the sampling complexity when the Hamiltonian $H$ is time-dependent.

From Eqs.\eqref{complexity_DN}, it can be seen that the number of samples typically required increases exponentially with duration $T$, which is inherently an NP-hard problem. A similar issue arises in quantum imaginary-time evolution algorithms \cite{Huo2023}. Therefore, for general non-Hermitian systems $H$, we set the duration of our algorithm to $T=\mathcal{O}(1/\Delta_{E_i})$ to circumvent this problem. 

Specially, when $H$ is time-independent, 
\begin{align}
 D\geqslant e^{-2({E_i}_{min}-{E_i}_{0})T}\cdot p_g, \nonumber\\
\end{align}
where $p_g\equiv |\langle\phi_g|\psi\rangle|^2$ is the overlap between the ground state eigenvalue $|\phi_g\rangle$ of $H_i$ and the initial state $|\psi\rangle$,  ${E_i}_0$ can be chosen to ensure $({E_i}_{min}-{E_i}_{0})T=\log(\mathcal{O}(1))$, then the maximum sampling number can be bounded by
\begin{align}
n_D&=\mathcal{O}\left(\frac{\log(\frac{1}{\delta})\cdot(\|O\|_{l_1}+1)^2\|g\|_1^4}{\eta^2 p_g^2}\right), \nonumber\\
n_N&=\mathcal{O}\left(\frac{\log(\frac{1}{\delta})\cdot(\|O\|_{l_1}+1)^2\|g\|_1^4}{\eta^2 p_g^2}\right). 
\end{align}

\end{proof}

\section{\label{Hamiltonian_simulation}The subroutine for Hamiltonian simulation without time discretization error}
In this section, we review a Hamiltonian dynamics simulation method without discretization error proposed by Etienne Granet and Henrik Dreyer, and this method belongs to the category of Quantum Monte Carlo methods \cite{Granet2024}. The emphasis on this HSWDE method stems from three primary factors: Firstly, it is theoretically devoid of systematic errors resulting from time discretization. Specifically, when compared to other Quantum Monte Carlo algorithms (such as qDrift, qSwift, QCMC, etc), increasing the number of samples not only effectively mitigates statistical errors but also systematically reduces errors stemming from time discretization. Secondly, this method can suppress systematic errors caused by the finite resolution of quantum gates and is theoretically immune to such errors. Lastly, in practical scenarios involving noisy quantum computers, this method has demonstrated potential exponential advantages in terms of precision dependence. These characteristics will collectively render this algorithm highly competitive in the NISQ era. We will see the effect of using this HSWDE method in Fig.\ref{Loschmidt_amplitude_QMC}.

Given a Hermitian Hamiltonian $H(t)=\sum_{n} c_n \sigma_n$, where $\sigma_n=\{I,X,Y,Z\}^{\otimes n}$ is generalized pauli operators, and $\{c_n\}$ are real. For any rotation Pauli gate operation, it can always be regarded as the rotation Pauli gate operation with bigger angle, but with a certain probability and an attenuation (gain). If this principle is written in the form of an equation, that is
\begin{align}\label{principle}
(1+p)I+p\cdot e^{-i|c|\cdot\mathrm{sgn}(c)\sigma\cdot\tau}=\lambda e^{-ic\sigma\cdot d\tau'},
\end{align}    
where the angle $|c|\tau>|c|d\tau'>0$, $c\in\mathbb{R}$ and $\lambda$ is the attenuation coefficient. The purpose of setting the sign function 'sgn' on the left side of the equation above is to ensure that the probability $p$ is always positive. Solving the above Eq.\eqref{principle},
\begin{align}\label{principle_solution}
p&=\frac{\tan(|c|d\tau')}{\sin\tau+(1-\cos\tau)\tan{d\tau'} }\nonumber\\
&= \frac{|c|d\tau'}{\sin\tau}+\mathcal{O}({(d\tau')^2}),\quad |c|d\tau'\rightarrow 0^+,\nonumber\\
\lambda&=\frac{p\sin\tau}{\sin\tau'}\nonumber\\
&=\frac{1}{\cos{d\tau'}\cdot(1+\tan{|c|d\tau'}\cdot\tan(\frac{\tau}{2}))}\nonumber\\
&=1-d\tau'\cdot|c|\tan\frac{\tau}{2}+\mathcal{O}({(d\tau')^2}),\quad |c|d\tau'\rightarrow 0^+.
\end{align}
Recalling that the definition of Poisson process,
\begin{definition}
 A stochastic process $\{N (t), t \geqslant 0\}$ is called a \textbf{time homogeneous Poisson process} (referred to as a \textbf{Poisson process} for short) if it satisfies: \\
 (1) a counting process, and N (0)=0;\\
 (2) an independent incremental process, that is, for any $0<t_1<t_2<·····<t_n$, $N(t_1),N(t_2)-N(t_1),..,N(ta_n)-N(t_{n-1})$ are mutual independence; \\
 (3) Incremental smoothness, i.e. $\forall s, t\geqslant0,n\geqslant0$, the probability $p[N(s+t)-N(s)=n]=p[N(t)=n]$; \\ 
 (4) For any $t>0$ and sufficiently small $\Delta t>0$, $p[N(t+ \Delta t)-N(t)= 1]= \alpha\cdot \Delta t+o(\Delta t)$ ($\alpha >0$), and $P[N(t+ \Delta t)- N(t)\geqslant 2|=o(\Delta t)$,
\end{definition}
so the process given in the Eq.\eqref{principle} is actually a Poisson process, except for a coefficient $\lambda$. Then multiplying the two sides of the Eq.\eqref{principle} by $M$ times while setting $M\cdot d\tau'\rightarrow T$, and according to the Eq.\eqref{principle_solution}, then the Eq.\eqref{principle} becomes 
\begin{align}\label{principle_Nsteps}
[(1+p)I+p\cdot e^{-i|c|\cdot\mathrm{sgn}(c)\sigma\cdot\tau}]^M=[\lambda e^{-ic\sigma\cdot d\tau'}]^M \rightarrow \lambda_{tot} e^{-ic\sigma\cdot T},\quad \lambda_{tot}=e^{-T\sum_n\cdot|c_n|\tan(\tau_n/2)}.
\end{align}
The left side of the above Eq.\eqref{principle_Nsteps} meets the binomial distribution $B(M,p)$, and as we all know, when $M\rightarrow \infty$, while $p\rightarrow 0^+$, the binomial distribution will be transformed into Poisson distribution, i.e., $B(M,p)\rightarrow Poi(M\cdot p)=Poi(\frac{|c|T}{\sin\tau})$, where we have used the results of the Eq.\eqref{principle_solution}. Therefore, when we consider the Hamiltonian simulation task,
\begin{align}
e^{-iH\cdot T}=\lim_{d\tau' \to 0^+} (e^{-ic_1\sigma_1\cdot d\tau'}e^{-ic_2\sigma_2\cdot d\tau'}...e^{-ic_N\sigma_N\cdot d\tau'})^{T/d\tau'},
\end{align}  
and according to the principle we introduced, we know that each small angle rotating Pauli gates \{$e^{-ic_n\sigma_n\cdot d\tau'}$\} can be decomposed into a larger angle rotating Pauli gates \{$e^{-i|c_n|\cdot\mathrm{sgn}(c_n)\sigma_n \cdot d\tau'}$\} with multiplying the constant coefficient $1/\lambda_n$ (the total attenuation coefficient will be $\lambda_{tot}=e^{-T\cdot\sum|c_n|\tan(\tau_n/2)})$, and the number of times they appear in  time intervals $[0, T]$ follows the Poisson distribution $\mathrm{Poi}(\frac{|c_n|T}{\sin\tau_n})$. Furthermore, according to the property of Poisson distribution, we know that the occurrence time of these gates $\{e^{-i|c_n|\cdot\mathrm{sgn}(c_n)\sigma_n \cdot d\tau'}\}$ will follow the uniform distribution $U(0,T)$ between time intervals [0, T]. By arranging all the rotated Pauli gates generated through sampling in ascending order of time, the quantum gate sequence is generated, denoted as $V$. Since the generation of each type of gate is independent of each other, the total number of gates in $V$ satisfies the Poisson distribution $\mathrm{Poi}(T\sum_n \frac{|c_n|}{\sin\tau_n})$, and 
\begin{align}\label{envolution_total}
e^{-iHT}=\frac{1}{\lambda_{tot}}\cdot\mathbb{E}V=e^{T\cdot\sum_n|c_n|\tan\frac{\tau_n}{2}}\cdot\mathbb{E}V,
\end{align}
where the symbol '$\mathbb{E}$' represents the expectation, and $\lambda_{tot}$ has been defined in Eq.\eqref{principle_Nsteps}. 

It is worth noting that this theory is also applicable to the case where the system is time-dependent. All that is needed is to change all the output parameters related to time cumulants from products to integrals with respect to time, such as  $\mathrm{Poi}(T\sum_n \frac{|c_n|}{\sin\tau_n})\rightarrow  \mathrm{Poi}(\sum_n \int_{0}^{T}\frac{|c_n(s)|}{\sin\tau_n}\mathrm{d}s)$, $\lambda_{tot}=e^{-T\cdot\sum_n|c_n|\tan(\tau_n/2)}\rightarrow \lambda_{tot}=e^{-\sum_n\int_{0}^{T}|c_n(s)|\tan(\tau_n/2)\mathrm{d}s}$. Within this Hamiltonian simulation method, our scheme given in Eq.\eqref{O_est} further evolves as follows:
\begin{align}\label{O_est_HSWDE}
\langle O\rangle=&\frac{\langle\psi|\mathcal{\overline{T}} e^{-i\int_0^T \tilde{H}^\dagger(s) d s}\cdot O\cdot\mathcal{T} e^{-i\int_0^T \tilde{H}(s) d s}|\psi\rangle }{\langle\psi|\mathcal{\overline{T}} e^{-i\int_0^T \tilde{H}^\dagger(s) d s}\cdot I\cdot\mathcal{T} e^{-i\int_0^T \tilde{H}(s) d s}|\psi\rangle } \nonumber\\
=&\frac{\lambda_{tot}^{-2}\cdot\|g\|\cdot\mathbb{E}_{V',V}\mathbb{E}_{k',k,n}\langle {O_n}_{V',V}(k',k)\cdot e^{is(k',k)})\rangle}{\lambda_{tot}^{-2}\cdot\|g\|\cdot\mathbb{E}_{V',V}\mathbb{E}_{k',k}\langle I_{V',V}(k',k)\cdot e^{is(k',k)}\rangle} \nonumber\\
=&\frac{\mathbb{E}_{V',V}\mathbb{E}_{k',k,n}\langle {O_n}_{V',V}(k',k)\cdot e^{is(k',k)})\rangle}{\mathbb{E}_{V',V}\mathbb{E}_{k',k}\langle I_{V',V}(k',k)\cdot e^{is(k',k)}\rangle},
\end{align}
where we continue to derive along the conclusion in Eq.\eqref{O_est}, $\langle [\cdot](k',k)\rangle\equiv \langle\psi|\mathcal{\overline{T}} e^{i \int_0^T K_s'(k') d s'}[\cdot]\mathcal{T} e^{-i \int_0^T K_s(k) d s}|\psi\rangle=\lambda_{tot}^{-2}\cdot\mathbb{E}_{V',V}\langle\psi|[\cdot]_{V',V}|\psi\rangle$, $\langle\psi|[\cdot]_{V',V}|\psi\rangle\equiv\langle\psi|V'^\dagger[\cdot]V|\psi\rangle$, $\mathbb{E}_{V',V}$ represents the expectation with respect to $V'$ and $V$ operators generated by the method given above the Eq.\eqref{envolution_total}, and $\lambda_{tot}$ is also defined in Eq.\eqref{principle_Nsteps} and bellow Eq.\eqref{envolution_total}.

\section{\label{ND_measurements} The measurements of $N(O)$ and $D$}
Next, we given the quantum circuit diagram for implementing $N(O)$. The circuit diagram for measuring numerator estimator $\hat{N}(O)$ is shown in Fig.\ref{hybrid_algorithm}. In the Fig.\ref{hybrid_algorithm}, $U(T,k)$ and $U(T,k')$ are the subroutines of Hamiltonian simulation, which can be seen as black boxes, and can be implemented by any Hamiltonian simulation program. The rand number $k$, $k'$, and $O_n$ are generated by a sampling program according to the probability density function $\mathrm{Pd}(k)$ and the probability $p_n$ given bellow Eq.\eqref{O_est}, and the phase part of sampling $e^{is(k,k')}$ and $e^{is(n)}$ should be finally multiplied in the calculation result. When the initial state $|\psi\rangle$ passed through this quantum circuit, we have
\begin{align}
|\Psi_n\rangle=\frac{1}{\sqrt{2}}(|0\rangle\otimes O_n\cdot U(T,k)|\psi\rangle+|1\rangle\otimes U(T,k')|\psi\rangle).
\end{align}
Therefore, 
\begin{align}\label{XY_measurement}
\langle\Psi_n|X|\Psi_n\rangle&=\mathrm{Re}\left(\langle\psi|(U^\dagger(T,k')\cdot O_n \cdot U(T,k)|\psi\rangle \right)\nonumber\\
&=\mathrm{Re}\langle O_n(k,k')\rangle, \nonumber\\
\langle\Psi_n|Y|\Psi_n\rangle&=-\mathrm{Im}\left(\langle\psi|(U^\dagger(T,k')\cdot O_n \cdot U(T,k)|\psi\rangle \right)\nonumber\\
&=-\mathrm{Im}\langle O_n(k,k')\rangle. 
\end{align}
After that, the estimator $\hat{N}(O)$ given in Eq.\eqref{n_N_average} can be obtained by calculating 
\begin{align}\label{Nk1k2}
\hat{N}(O)=\|g\|_1^2\cdot\sum_n\left(\langle\Psi_n|X|\Psi_n\rangle-i\langle\Psi_n|Y|\Psi_n\rangle\right)\cdot e^{\theta(k,k')},
\end{align}
and then 
\begin{align}\label{N}
N(O)=\|g\|_1^2\cdot\mathbb{E}_{k',k}\langle O(k',k)\cdot e^{i\theta(k',k)}\rangle.
\end{align}

The algorithm for estimating $N(O)$ (or $D$) is given in Algorithm \ref{alg:ND}.

\section{\label{open_systems_vec} The isomorphic relation between open quantum systems and non-Hermitian systems with vectorization}
The Lindbald superoperator $\mathcal{L}$ in Eq.\eqref{master equation} will be mapped to
\begin{align}\label{Lindblad operator_matrix}
  \overline{\overline{\mathcal{L}}}(t)=-i[{H}(t)\otimes I-I\otimes H^{\mathrm{T}}(t)]+\sum_\mu{ [\Gamma_\mu\otimes({\Gamma^\dag_\mu})^\mathrm{T}-\frac{\Gamma^\dag_\mu\Gamma_\mu\otimes I}{2}-\frac{I\otimes({\Gamma^\dag_\mu\Gamma_\mu})^\mathrm{T}}{2} ] }.
\end{align}
After that the Eq.\eqref{master equation} will be mapped to
\begin{equation}\label{Lindblad equation vec}
  \frac{\mathrm{d}|\rho(t)\rangle\!\rangle}{\mathrm{d} t}=-iL(t)|\rho(t)\rangle\!\rangle,
\end{equation}
where we have rewritten $\overline{\overline{\mathcal{L}}}$ as $-iL(t)$ to make the above equation a Schr\"{o}dinger-like equation. Then we can get the solution of the above equation,
\begin{equation}\label{solution_master_equation}
  |\rho(t)\rangle\!\rangle=\mathcal{T}e^{-i\int_0^t L(\tau)\mathrm{d}\tau }|\rho(0)\rangle\!\rangle,
\end{equation}
where $|\rho(0)\rangle\!\rangle$ is the vectorization of the initial density operator $\rho(0)$. 

Meanwhile, in the context of vectorization, for the mean value of any observable $O$,
\begin{align}\label{O_OQS}
\langle O\rangle&=\mathrm{Tr}(O\rho(t))\nonumber\\
&=\langle\!\langle O^\dagger|\mathcal{T}e^{-i\int_0^t L(\tau)\mathrm{d}\tau }|\rho(0)\rangle\!\rangle \nonumber\\
&=(\||O^\dagger\rangle\!\rangle\|\cdot\||\rho(0)\rangle\!\rangle\|)\cdot\langle O^\dagger|\mathcal{T}e^{-i\int_0^t L(\tau)\mathrm{d}\tau }|\rho(0)\rangle  \nonumber\\
&=(\|O\|_F\cdot\|\rho(0)\|_F)\cdot\langle O^\dagger|\mathcal{T}e^{-i\int_0^t L(\tau)\mathrm{d}\tau }|\rho(0)\rangle.
\end{align} 
It is worth noting that when $O$ is a Hermitian operator, $O^\dagger$ in the above Eq.\eqref{O_OQS} can be replaced by $O$, at the same time, $O$ and $\rho(0)$, so $\langle\!\langle O^\dagger|$ and $|\rho(0)\rangle\!\rangle$, are given in the beginning. Meanwhile, according to Eq.\eqref{Cauthy_distribution_trans}, the evolution operator $\mathcal{T}e^{\int_0^t -iL(\tau)\mathrm{d}\tau }$ can be implemented, therefore, the problem of the dynamics simulation of open quantum systems may be addressed. Specially, in the similar way with the Eq.\eqref{HrHi}, $L(t)$ in the Eq.\eqref{Lindblad equation vec} can be written as
\begin{align}
L(t)=L_r(t)-iL_i(t),
\end{align}
where 
\begin{align}\label{LrLi}
L_r=&\frac{L+L^\dagger}{2}=H\otimes I-I\otimes H^\mathrm{T}+\frac{i}{2}\sum_{\mu}[\Gamma_\mu\otimes(\Gamma_\mu^\dagger)^\mathrm{T}-\Gamma_\mu^\dagger\otimes\Gamma_\mu^\mathrm{T}], \nonumber\\
L_i=&i\frac{L+L^\dagger}{2}=\frac{1}{2}\sum_\mu [\Gamma_\mu^\dagger\Gamma_\mu\otimes I+I\otimes (\Gamma_\mu^\dagger\Gamma_\mu)^\mathrm{T}-\Gamma_\mu\otimes(\Gamma^\dagger_\mu)^\mathrm{T}-\Gamma^\dagger_\mu\otimes\Gamma_\mu^\mathrm{T}]. 
\end{align}
We must emphasize that, in fact, most of the time $L_i$ does not satisfy the condition $L_i\geqslant0$. Therefore, we have to choose a constant $c_p$ such that $L_i + c_p\geqslant0$. Then  according to Eq.\eqref{Cauthy_distribution_trans}, Eq.\eqref{O_OQS} will become 
\begin{align}\label{O_OQS_comp}
\langle O\rangle=(\||O^\dagger\rangle\!\rangle\|\cdot\||\rho(0)\rangle\!\rangle\|)\cdot e^{c_pt}\cdot\langle O^\dagger|\int_\mathbb{R}g(k)\cdot\mathcal{T}e^{-i\int_0^t [L_r(\tau)+k\cdot (L_i(\tau)+c_p)]\mathrm{d}\tau }\mathrm{d}k|\rho(0)\rangle.
\end{align}
After that, the non-unitary evolution can be implemented on a quantum computers, and the sampling method is implemented according to the Eq.\eqref{O_est}. Comparing Eq.\eqref{O_OQS_comp} with Eq.\eqref{O_est}, we know that, compared to the simulation problem of the evolution of a typical non-Hermitian system, the simulation problem of an open quantum system after vectorization does not have the denominator part in Eq.\eqref{O_est}. This brings some simplification to the problem. However, it must be noted that this still can not completely solve the problem of exponentially growing variance after long-time evolution due to the presence of the compensation term $e^{c_pt}$ in the Eq.\eqref{O_OQS_comp}. This problem is NP-hard in typical imaginary-time evolution. In the following, we will prove that in an open quantum system, the variance can be bounded.

\subsection{\label{variance_bound}The variance bound}
Assuming that $L=\sum_n l_n \sigma_n$, then $\|L\|_{l_1}=\sum_n |l_n|$. After the evolution of time $T_f=\max_j\{-1/\mathrm{Re}\Delta_j\}$, where $\Delta_j$ is the eigenvalue of $\mathcal{L}$, the system is considered to have reached a steady state \cite{Wang2022}, and for open quantum systems, there is always $\mathrm{Re}\Delta_j<0$, except for the steady state, $\Delta_j=0$. It is worth noting that $\Delta_j$ can also be estimated by the classical or quantum algorithms Ref.\cite{Yuan2021,Zhang2024}. From Eq.\eqref{Lindblad operator_matrix}, it is evident that  
\begin{align}
c_p\leqslant \|L\|_{l_1}\leqslant2(\|H\|_{l_1}+\sum_\mu \|\Gamma_\mu\|_{l_1}^2), 
\end{align}
then the variance can be bounded by
\begin{align}
e^{c_p\cdot T_f}\leqslant e^{2(\|H\|_{l_1}+\sum_\mu \|\Gamma_\mu\|_{l_1}^2)\cdot T_f},
\end{align}
so the variance can be bounded by $e^{\mathcal{O}(1)}$, and the simulation can be implemented. In fact, the ground state eigenvalue of $L_i$ can be estimated, and may be much smaller than $\|L\|_{l_1}$. 

\subsection{\label{normal jump operators}Proof of $L_i\geqslant0$ for any normal jump operators}
As already proven previously, when the jump operators of the system are normal operators, the problem of exponentially growing variance can be completely avoided. Supposing that the jump operators of $\{\Gamma_\mu\}$ in Eq.\eqref{LrLi} are normal operators, and they have spectral decomposition $\Gamma_\mu=u_\mu\cdot d_\mu\cdot u_\mu^\dagger$, where $u_\mu$ is unitary matrix, and $d_\mu$ is the complex diagonal matrix. Then $L_i$ in Eq.\eqref{LrLi} has the following decomposition,
\begin{align}
L_i=&\frac{1}{2}\sum_\mu [\Gamma_\mu^\dagger\Gamma_\mu\otimes I+I\otimes (\Gamma_\mu^\dagger\Gamma_\mu)^\mathrm{T}-\Gamma_\mu\otimes(\Gamma^\dagger_\mu)^\mathrm{T}-\Gamma^\dagger_\mu\otimes\Gamma_\mu^\mathrm{T}] \nonumber\\
=&\frac{1}{2}\sum_{u}u_\mu\otimes u_\mu^*\cdot[d_\mu^*d_\mu\otimes I+I\otimes d_\mu d_\mu^*-d_\mu\otimes d_\mu^*-d_\mu^*\otimes d_\mu]\cdot(u_\mu\otimes u_\mu^*)^\dagger,
\end{align}
where $u_\mu\otimes u_\mu^*$  are obviously also unitary matrices, and the matrix between  $u_\mu\otimes u_\mu^*$ and  $(u_\mu\otimes u_\mu^*)^\dagger$ are obviously diagonal matrices, and we can record it as $\boldsymbol{d}_\mu$ for the convenience of description. After that, 
\begin{align}
(\boldsymbol{d}_\mu)_{m,n}=|d_m|^2+|d_n|^2-d_md_n^*-d_m^*d_n\geqslant0,
\end{align}
where we denote $(\cdot)_{m,n}\equiv[\cdot]_{mm}\otimes[\cdot]_{nn}$. Therefore, consider any vector $|v\rangle\!\rangle$ (state) with the same dimension, 
\begin{align}
\langle\!\langle v|L_i|v\rangle\!\rangle=\frac{1}{2}\sum_{\mu}\langle\!\langle v|\boldsymbol{u}_\mu\cdot\boldsymbol{d}_\mu\cdot\boldsymbol{u}^\dagger_\mu|v\rangle\!\rangle\geqslant0,
\end{align}
where we define $\boldsymbol{u}_\mu\equiv u_\mu\otimes u_\mu^*$. This conclusion means that $L_i$ is semi-positive definite (i.e., $L_i\geqslant0$) for normal jump operators $\{\Gamma_\mu\}$.

\end{document}